\def\noheaderplainsetup{

\topmargin=0pt \headheight=0pt \headsep=0pt  \oddsidemargin=0pt \evensidemargin=0pt  \textheight=8.4truein \textwidth=6.4truein}   
\begin{document}


 \newcommand{\one}{\mbox{\sc One}}
 \newcommand{\two}{\mbox{\sc Two}}
 \newcommand{\three}{\mbox{\sc Three}}
 \newcommand{\four}{\mbox{\sc Four}}
 \newcommand{\fif}{\mbox{\bf CL15}}
\newcommand{\xx}{\wp}
\newcommand{\col}[1]{\mbox{$#1$:}}

\newcommand{\seq}[1]{\langle #1 \rangle}           


\newcommand{\mla}{\mbox{{\Large $\wedge$}}}
\newcommand{\mle}{\mbox{{\Large $\vee$}}}

\newcommand{\pst}{\mbox{\raisebox{-0.01cm}{\scriptsize $\wedge$}\hspace{-4pt}\raisebox{0.16cm}{\tiny $\mid$}\hspace{2pt}}}
\newcommand{\gneg}{\neg}                  
\newcommand{\mli}{\rightarrow}                     
\newcommand{\cla}{\mbox{\large $\forall$}}      
\newcommand{\cle}{\mbox{\large $\exists$}}        
\newcommand{\mld}{\vee}    
\newcommand{\mlc}{\wedge}  
\newcommand{\ade}{\mbox{\Large $\sqcup$}}      
\newcommand{\ada}{\mbox{\Large $\sqcap$}}      
\newcommand{\add}{\sqcup}                      
\newcommand{\adc}{\sqcap}                      

\newcommand{\tlg}{\bot}               
\newcommand{\twg}{\top}               
\newcommand{\st}{\mbox{\raisebox{-0.05cm}{$\circ$}\hspace{-0.13cm}\raisebox{0.16cm}{\tiny $\mid$}\hspace{2pt}}}
\newcommand{\stl}{\st\hspace{-2pt} ^{\mbox{\tiny L}}}
\newcommand{\stt}{\st \hspace{-2pt} ^{\mbox{\tiny T}}}
\newcommand{\costl}{\cost \hspace{-1pt} ^{\mbox{\tiny L}}}
\newcommand{\costt}{\cost \hspace{-1pt} ^{\mbox{\tiny T}}}
\newcommand{\bst}{\mbox{\raisebox{-0.05cm}{$\circ$}\hspace{-0.135cm}\raisebox{0.03cm}{.}\hspace{-0.08cm}\raisebox{0.16cm}{\tiny $\mid$}\hspace{2pt}}}
\newcommand{\cost}{\mbox{\raisebox{0.12cm}{$\circ$}\hspace{-0.13cm}\raisebox{0.02cm}{\tiny $\mid$}\hspace{2pt}}}
\newcommand{\bcost}{\mbox{\raisebox{0.12cm}{$\circ$}\hspace{-0.135cm}\raisebox{0.2cm}{.}\hspace{-0.08cm}\raisebox{0.02cm}{\tiny $\mid$}\hspace{2pt}}}
\newcommand{\pcost}{\mbox{\raisebox{0.12cm}{\scriptsize $\vee$}\hspace{-4pt}\raisebox{0.02cm}{\tiny $\mid$}\hspace{2pt}}}
\newcommand{\uvalid}{\mbox{$\vdash\hspace{-5pt}\vdash\hspace{-5pt}\vdash$}}


\newtheorem{theoremm}{Theorem}[section]
\newtheorem{corollaryy}[theoremm]{Corollary}
\newtheorem{definitionn}[theoremm]{Definition}
\newtheorem{lemmaa}[theoremm]{Lemma}
\newtheorem{propositionn}[theoremm]{Proposition}
\newtheorem{conventionn}[theoremm]{Convention}
\newtheorem{examplee}[theoremm]{Example}
\newtheorem{remarkk}[theoremm]{Remark}
\newtheorem{factt}[theoremm]{Fact}
\newtheorem{exercisee}[theoremm]{Exercise}

\newenvironment{exercise}{\begin{exercisee} \em}{ \end{exercisee}}
\newenvironment{definition}{\begin{definitionn} \em}{ \end{definitionn}}
\newenvironment{theorem}{\begin{theoremm}}{\end{theoremm}}
\newenvironment{lemma}{\begin{lemmaa}}{\end{lemmaa}}
\newenvironment{proposition}{\begin{propositionn} }{\end{propositionn}}
\newenvironment{corollary}{\begin{corollaryy} \em}{\end{corollaryy}}
\newenvironment{remark}{\begin{remarkk} \em}{\end{remarkk}}
\newenvironment{proof}{ {\bf Proof.} }{\  \rule{2.5mm}{2.5mm} \vspace{.2in} }
\newenvironment{idea}{ {\bf Idea.} }{\  \rule{1.5mm}{1.5mm} \vspace{.15in} }
\newenvironment{example}{\begin{examplee} \em}{\end{examplee}}
\newenvironment{fact}{\begin{factt}}{\end{factt}}

\title{A new face of the branching recurrence of computability logic}
\author{Giorgi Japaridze}
\date{}
\maketitle

\begin{abstract} This letter introduces a new, substantially simplified version of the branching recurrence operation of computability logic, and proves its equivalence to the old, ``canonical''  version. 
\end{abstract}

\noindent {\em MSC}: primary: 03B47; secondary: 03B70; 68Q10. 

\  

\noindent {\em Keywords}: Computability logic; Interactive computation; Game semantics; Resource semantics.

\section{Introduction}
{\em Computability logic} (CoL) is a long-term project for redeveloping logic on the basis of a constructive game semantics. 
The approach induces a rich collection of logical operators, standing for various natural operations on games. Among the most important of those is the {\em branching recurrence} operator $\st$, in its logical behavior reminiscent of Girard's \cite{Gir87} storage operator $!$ and (especially) Blass's \cite{Bla92} repetition operator $R$, yet different from either: for instance, the principle $\cost\st P\mli\st\cost P$ \ ($\cost$ means $\gneg\st\gneg$) is valid in CoL while linear of affine logics do not prove it with $\st,\cost$ understood as $!,?$, and $\mli$ as linear implication; and the principle $P\mlc \st (P\mli P\mlc P)\mlc \st(P\mld P\mli P)\mli \st P$ is not valid in CoL (nor provable in affine logic) while its counterpart is validated by Blass's semantics. 

 Recent years (\cite{Japtocl1}-\cite{Japseq}, \cite{Japtowards}-\cite{Ver} and more) have seen rapid and sustained progress in constructing sound and complete axiomatizations for various, often quite expressive, fragments of CoL, at both the propositional and the first-order levels. Those fragments, however, have typically been recurrence-free,\footnote{The so called intuitionistic fragment of CoL, studied in \cite{Japjsl,Propint,Ver}, is the only exception. There, however, the usage of $\st$ is limited to the very special form/context $\st E\mli F$.} and finding  syntactic descriptions (such as axiomatizations) of the logic induced by $\st$ remains among the greatest challenges in the  CoL enterprise. Among the reasons why the progress towards 
overcoming this challenge has been so slow is the degree of technical involvement of the existing, ``canonical'' definition of $\st$ as given in \cite{Jap03,Japfin}. It has become increasingly evident that replacing that definition by a substantially less intricate counterpart would be necessary in order to make a breakthrough in syntactically taming $\st$. This is exactly  what the present paper is devoted to. It introduces a technically new, very simple and compact, definition of $\st$ and proves that the new version of $\st$ is logically equivalent to the old one. This means that, from now on, studies of the fragments of CoL involving $\st$ can safely focus on the new version of this operator without losing any already known results concerning $\st$ and without any need to reintroduce or revisit the philosophical, mathematical or computational motivations and intuitions associated with $\st$ and elaborated in detail in the earlier literature on CoL.   

We call the old version of $\st$ and its dual $\cost$ {\bf tight}, and call the new versions of these operations {\bf loose}. Due to equivalence, the
purely  technical difference between the two versions does nor warrant introducing new symbols for the new operations. However, since this paper has to simultaneously deal with both versions, in order to avoid confusion, we shall use the symbols $\stt,\costt$ for the tight versions of $\st,\cost$, and the symbols $\stl,\costl$ for the loose versions. 

The intended audience for this relatively short (by the standards of CoL) and technical paper is expected to be familiar with the main concepts of CoL, such as those of static games, easy-play machines (EPM), the basic game operations, validity, and the related notions. If not, it would be both necessary and sufficient to read the first ten sections of \cite{Japfin}  for a self-contained, tutorial-style introduction. Having \cite{Japfin} at hand would probably be necessary in any case, because we  rely on the notation and terminology of \cite{Japfin} without reintroducing them, so any unfamiliar symbols or terms should be looked up in \cite{Japfin}, which has a convenient glossary\footnote{The glossary for the published version of \cite{Japfin} is given at the end of the {\em book} (rather than {\em article}), on pages 371-376. The reader may instead use the preprint version of \cite{Japfin}, available at http://arxiv.org/abs/cs.LO/0507045 The latter includes both the main text and the glossary.} for that.  The definition of $\stt$ given in \cite{Japfin} is longer than necessary for our purposes and, for that reason, the present paper reintroduces $\stt$ through a shorter definition. No other old operations or concepts will be reintroduced and, again, they are to be understood as defined or explained in \cite{Japfin}.   

\section{The two versions of branching recurrence}
Remember that,  in semiformal terms, a play of $\stt A$  starts as an ordinary play of game $A$. At any time, however, player $\bot$ (the environment) is allowed to make a ``replicative move'', which creates two copies of the current position $\Phi$ of $A$. From that point on, the game turns into two games played in parallel, each continuing 
from position $\Phi$. We use the bits $0$ and $1$ to denote those two threads, which have a common past (position $\Phi$) but possibly diverging futures. Again, at any time, $\bot$ can further branch either thread, creating two copies of the current position of that thread. If thread $0$ was branched, the resulting two threads will be denoted by $00$ and $01$; and if the branched thread was $1$, then the resulting threads will be denoted by $10$ and $11$. And so on: at any time, $\bot$ may split any of the existing threads $w$ into two threads $w0$ and $w1$. Each thread in the eventual run of the game will be thus denoted by a (possibly infinite) bitstring. The game is considered won by $\top$ (the machine) if it wins $A$ in each of the threads; otherwise the winner is $\bot$.

In formal terms, consider a constant game $A$.  There are two types of legal moves in (legal) positions  of $\stt A$: {\bf replicative} and  {\bf non-replicative}. To define these, we  agree that, where $\Phi$ is a position, by an {\bf node} of the underlying BT-structure\footnote{``BT'' stands for ``bitstring tree''.} of $\seq{\Phi}\stt A$
we mean a bitstring $w$ such that $w$ is either empty,\footnote{Intuitively, the empty bitstring is the name/address of the initial thread; all other threads will be descendants of that thread.} 
or else is $u0$ or $u1$ for some bitstring $u$ such that $\Phi$ contains the move $\col{u}$. 
Such a  node is said to be a {\bf leaf} iff it is not a proper prefix of any other node of the underlying BT-structure of $\seq{\Phi}\stt A$.\footnote{Intuitively, a leaf 
is the unique individual name of an already existing thread of a play over $A$, while a node $w$ which is not a leaf is 
a ``partial'' common name of several already existing threads --- namely, all threads whose individual names look like $wv$ for some bitstring $v$.}  A replicative move can only be made by (is only legal for) $\bot$, and such a move in a given position $\Phi$ should be $\col{w}$, where $w$ is a leaf of the underlying BT-structure of 
$\seq{\Phi}\stt  A$.\footnote{The intuitive meaning of move $\col{w}$ is splitting thread $w$ into two new threads $w0$ and $w1$.} As for non-replicative moves, they can be made by either player. Such a move by a player $\xx$ in a given position $\Phi$ should be $w.\alpha$, where   $w$ is a node of the underlying BT-structure of $\seq{\Phi}\stt A$ and $\alpha$ is a move such that, for any infinite extension $v$ of $w$, $\alpha$ is a legal move by $\xx$ in the position $\Phi^{\preceq v}$ of $A$.\footnote{The intuitive meaning of such a move $w.\alpha$ is making move $\alpha$ in thread $w$ and all of its (current or future) descendants.} Here and later, for a run $\Theta$ and a bitstring $x$, $\Theta^{\preceq x}$ means the result of deleting from $\Theta$ all moves except those that look like $u.\beta$ for some initial segment $u$ of $x$, and then further deleting the prefix ``$u.$'' from such moves.\footnote{Intuitively, $\Theta^{\preceq x}$ is the run of $A$ that has been played in thread $x$, if such a thread exists (has been  generated); otherwise, $\Theta^{\preceq x}$ is the run of $A$ that has been played in (the unique) existing thread which (whose name, that is) is
some initial segment of $x$.} A legal run $\Gamma$ of $\stt A$ is considered won by $\top$ iff, for every infinite bitstring $v$, $\Gamma^{\preceq v}$ is a $\top$-won run of $A$. This completes our definition of $\stt$. The dual operation $\costt$ is defined in a symmetric way, by interchanging $\top$ with $\bot$. That is, $\costt A=\gneg\stt\neg A$.

This was a brutally quick review, of course. See \cite{Japfin} for more explanations and illustrations. 

Anyway, now it is time to define $\stl$. 
A run $\Gamma$ is stipulated to be a legal run of $\stl A$ iff every move of $\Gamma$ has the prefix ``$w.$'' for some finite bitstring $w$ and,  
for any infinite bitstring $v$, $\Gamma^{\preceq v}$ is a legal run of $A$ (here $\Gamma^{\preceq v}$  means the same as before). Next, such a $\Gamma$  
is considered to be a $\top$-won run of $\stl A$ iff, for every infinite bitstring $v$, $\Gamma^{\preceq v}$ is a $\top$-won run of $A$. As always, 
the dual operation $\costl$ is defined in a symmetric way by interchanging $\top$ with $\bot$, or by stipulating that  $\costl A=\gneg \stl\gneg A$. 

Intuitively, $\stl A$ can be seen as parallel play of a continuum of threads/copies of $A$.\footnote{Nothing to worry 
about: ``playing a continuum of copies'' does not destroy the ``finitary'' or ``playable'' character of our games. Every move or position is still 
a finite object, and every infinite run is still an $\omega$-sequence of (lab)moves.} Each thread is denoted by an infinite bitstring and vice 
versa: every infinite bitstring denotes a thread. The meaning of a move $w.\alpha$, where $w$ is a finite bitstring, is making the move $\alpha$ simultaneously in all threads  of the form $wy$. Correspondingly, when $\Gamma$ is a legal run of $\stl A$  and $x$ 
is an infinite bitstring, $\Gamma^{\preceq x}$ represents the run of $A$ that took place in thread $x$.  And, in order to win the overall game $\stl A$, 
$\top$ needs 
to win $A$ in all threads. As we saw earlier, a similar characterization applies to $\stt A$ as well. However, the difference  --- again at the intuitive level --- is that, while in the tight version of the game the threads are generated/built/grown step-by-step through replicative moves (and ordinary moves of $A$ are only allowed to be made in existing threads), in the loose version all of the uncountably many threads are ``already there'' from the very beginning (which explains the absence of replicative moves), so that moves of $A$ can be made in any of them at any time. 

\section{The preservation of the static property}

Whenever a new game operation is introduced in CoL, one needs to make sure that it preserves the static property of games, for otherwise many things can go wrong. 

\begin{theorem}\label{jan31}
 The class of static games is closed under $\stl$ and $\costl$.
\end{theorem}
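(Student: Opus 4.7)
The plan is to reduce staticness of $\stl A$ to staticness of $A$ via the thread-projection operation $\Theta\mapsto \Theta^{\preceq v}$. The crucial lemma I would establish first is that thread-projection commutes with the $\xx$-delay relation: for any runs $\Gamma,\Delta$ of $\stl A$ such that $\Gamma$ is a $\xx$-delay of $\Delta$, and for every infinite bitstring $v$, the projection $\Gamma^{\preceq v}$ is a $\xx$-delay of $\Delta^{\preceq v}$ in $A$. The reason is that $\Theta^{\preceq v}$ is obtained by (a) deleting every move whose address $w$ in the prefix ``$w.$'' is not an initial segment of $v$ and (b) stripping the prefix ``$w.$'' from the surviving moves. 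Both of these operations are player-blind and preserve the relative orderings of the surviving moves, so they transport the ``$\xx$-moves are merely delayed'' structure from the full run down into every thread.

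Granting this lemma, the theorem follows by a straightforward unpacking of definitions. Fix a player $\xx$ and runs $\Gamma,\Delta$ of $\stl A$ with $\Gamma$ a $\xx$-delay of $\Delta$. For the legality condition of staticness, note that a $\xx$-delay leaves the multiset of moves unchanged, so every move of $\Gamma$ still has the required prefix form $w.\alpha$ for some finite bitstring $w$. If $\Delta$ is a legal run of $\stl A$, then $\Delta^{\preceq v}$ is a legal run of $A$ for every infinite $v$; applying staticness of $A$ to the $\xx$-delay $\Gamma^{\preceq v}$ of $\Delta^{\preceq v}$ gives that $\Gamma^{\preceq v}$ is legal in $A$ as well, so $\Gamma$ is legal in $\stl A$. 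For the winner condition, if $\Delta$ is $\xx$-won in $\stl A$, then $\Delta^{\preceq v}$ is $\xx$-won in $A$ for every infinite $v$; staticness of $A$ together with the key lemma yields that $\Gamma^{\preceq v}$ is $\xx$-won for every such $v$, which by definition means $\Gamma$ is $\xx$-won in $\stl A$. Hence $\stl A$ is static.

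The closure under $\costl$ then reduces to the closure under $\stl$ via the identity $\costl A=\gneg\stl\gneg A$, invoking the known fact from \cite{Japfin} that $\gneg$ preserves staticness. Alternatively, one can repeat the argument above verbatim with the roles of $\top$ and $\bot$ swapped, since the definition of $\costl$ is completely symmetric.

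The only potentially delicate step is the key lemma. A careful version requires unwinding the formal definition of $\xx$-delay (that the positions of $\xx$-moves in $\Gamma$ relative to $\gneg\xx$-moves are weakly shifted to the right, while the relative orderings within each player's own move sequence are preserved), and then observing that deletion of moves according to a player-agnostic filter can neither introduce nor destroy this property. Once that observation is recorded, everything else is essentially a matching of definitions.
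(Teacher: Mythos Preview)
Your key lemma---that the projection $\Theta\mapsto\Theta^{\preceq v}$ carries $\xx$-delays to $\xx$-delays---is correct and is indeed the main technical ingredient; the paper relies on it as well. The gap is in how you apply it. Staticness of $A$ asserts only that a $\xx$-delay of a $\xx$-\emph{won} run is $\xx$-won; it does \emph{not} assert that a $\xx$-delay of a \emph{legal} run is legal, and in fact that is false: delaying $\xx$'s moves can render one of $\gneg\xx$'s moves premature, making the delayed run $\gneg\xx$-illegal (still $\xx$-won, so no contradiction with staticness). Hence the step ``applying staticness of $A$ \ldots\ gives that $\Gamma^{\preceq v}$ is legal'' does not go through, and there is no separate ``legality condition of staticness'' to invoke. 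Two related oversights: your argument never treats the case where $\Delta$ is $\xx$-won in $\stl A$ by being $\gneg\xx$-illegal rather than legal; and ``$\Delta^{\preceq v}$ is $\xx$-won for every infinite $v$'' is wrong for $\xx=\bot$ (it should be ``some'').

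What is missing is precisely what the paper isolates as Lemma~\ref{l14}: if $\Omega$ is a $\xx$-delay of $\Gamma$ and $\Omega$ is $\xx$-illegal in $\stl A$, then so is $\Gamma$. This is proved by induction on the length of the shortest illegal initial segment of $\Omega$, distinguishing whether the offending move fails to have the form $w.\beta$ or instead makes some thread projection illegal (your key lemma together with Lemma~4.7 of \cite{Jap03} handles the latter). With Lemma~\ref{l14} and Lemma~4.6 of \cite{Jap03} in hand, one can dispose of all illegal cases and reduce to the situation where both runs are legal in $\stl A$; only then does your thread-by-thread argument (with the every/some correction) finish the proof.
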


The rest of this section is devoted to a proof of the above theorem. Considering only $\stl$ is sufficient, because $\costl$ is expressible in terms of $\stl$ and $\gneg$, with $\gneg$ already known (Theorem 14.1 of \cite{Jap03}) to preserve the static property of games.  
\begin{lemma}\label{l14} 
Assume $A$ is a constant static game, $\Omega$ is a $\wp$-delay of $\Gamma$, and $\Omega$ is a $\wp$-illegal run of $\stl A$. Then $\Gamma$ is also a $\wp$-illegal run of $\stl A$.
\end{lemma}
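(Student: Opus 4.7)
My plan is to separate the two possible grounds for the $\wp$-illegality of $\Omega$ as a run of $\stl A$ and handle each by exploiting the fact that a $\wp$-delay preserves the labmove content of a run.

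The key preliminary observation I would establish is that, for every infinite bitstring $v$, the thread projection $\Omega^{\preceq v}$ is itself a $\wp$-delay of $\Gamma^{\preceq v}$ as runs of $A$. The operation $\cdot^{\preceq v}$ merely filters labmoves (keeping those whose address is an initial segment of $v$) and strips the ``$u.$'' prefix from the survivors; it performs no reordering. Consequently all four defining conditions of ``$\wp$-delay''---same labmoves, same relative order of $\wp$-labeled moves, same relative order of $\gneg\wp$-labeled moves, and $\wp$-moves only possibly postponed---descend from $(\Omega,\Gamma)$ to $(\Omega^{\preceq v},\Gamma^{\preceq v})$.

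With this in hand I would split into two cases. Case~1: some labmove of $\Omega$ lacks a prefix of the form ``$w.$''. Since $\Omega$ is $\wp$-illegal, the offending move is $\wp$-labeled, and it occurs verbatim in $\Gamma$ because the two runs share the same labmoves. Malformedness of a move is intrinsic and position-independent, so the same syntactic violation occurs in $\Gamma$ at a $\wp$-labeled move. Case~2: every labmove of $\Omega$ is well-formed, in which case $\wp$-illegality of $\Omega$ forces the existence of an infinite bitstring $v$ for which $\Omega^{\preceq v}$ is a $\wp$-illegal run of $A$. By the preliminary observation together with the assumed staticity of $A$, $\Gamma^{\preceq v}$ is then $\wp$-illegal as a run of $A$, and hence $\Gamma$ is $\wp$-illegal as a run of $\stl A$.

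The single delicate point I anticipate is the ``first illegal move is $\wp$-labeled'' bookkeeping inside Case~1: one must rule out the possibility that $\Gamma$ exhibits a strictly earlier illegality attributable to $\gneg\wp$ that happened to be masked in $\Omega$ by a delayed $\wp$-move. This will reduce, once again, to the filtering-preserves-delay observation combined with staticity of $A$ applied per thread: any such earlier $\gneg\wp$-illegality in $\Gamma$ would project, via some $v$, to a $\gneg\wp$-illegal run of $A$ and thereby (using staticity of $A$ in the other direction) force a corresponding $\gneg\wp$-illegality in $\Omega^{\preceq v}$ preceding the $\wp$-violation, contradicting our choice of the first illegal move in $\Omega$.
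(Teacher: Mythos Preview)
Your overall plan---project to threads, use that $\Omega^{\preceq v}$ is a $\wp$-delay of $\Gamma^{\preceq v}$, and invoke staticity of $A$ per thread---matches the paper's line of attack, and the preliminary observation is correct and used in the paper as well. However, your treatment of the ``delicate point'' is where the argument breaks down, and this gap is fatal as written.

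First, the same delicate point arises in Case~2, not only in Case~1: even after you deduce that $\Gamma^{\preceq v}$ is $\wp$-illegal in $A$, you still need to know that the \emph{first} illegal move of $\Gamma$ in $\stl A$ is $\wp$-labeled, and it could in principle be an earlier $\gneg\wp$-move whose illegality shows up in a different thread $v'$. Second, your proposed resolution is circular. You argue: an earlier $\gneg\wp$-illegality in $\Gamma$ projects to some $v'$, so $\Gamma^{\preceq v'}$ is $\gneg\wp$-illegal; since $\Gamma^{\preceq v'}$ is a $\gneg\wp$-delay of $\Omega^{\preceq v'}$, staticity gives that $\Omega^{\preceq v'}$ is $\gneg\wp$-illegal; and this should contradict $\Omega$ being $\wp$-illegal. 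But it does not. Let $\seq{\Psi,\wp\alpha}$ be the shortest illegal initial segment of $\Omega$ in $\stl A$. Then $\Psi^{\preceq v'}$ is legal in $A$, so the $\gneg\wp$-move responsible for the illegality of $\Omega^{\preceq v'}$ must come from a position of $\Omega$ strictly after $\wp\alpha$. Hence $\Omega^{\preceq v'}$ being $\gneg\wp$-illegal is perfectly compatible with $\Omega$ being $\wp$-illegal at the $\stl A$ level: the thread-level $\gneg\wp$-violation occurs ``later'' in $\Omega$ than the global $\wp$-violation, and there is no contradiction. What you are really trying to invoke is the statement ``if $\Gamma$ is $\gneg\wp$-illegal in $\stl A$ then so is $\Omega$,'' which is precisely the lemma itself with the roles of $\wp,\gneg\wp$ and $\Gamma,\Omega$ swapped.

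The paper breaks this circularity by an induction on the length of the shortest illegal initial segment of $\Omega$. It takes $\seq{\Psi,\wp\alpha}$ as above and $\seq{\Phi,\wp\alpha}$ the shortest initial segment of $\Gamma$ containing the same $\wp$-labmoves; then $|\Phi|\le|\Psi|$, and if $\Phi$ were $\gneg\wp$-illegal, the induction hypothesis (applied to the pair $(\Omega,\Gamma)$ with $\gneg\wp$ in place of $\wp$) would force $\Omega$ to be $\gneg\wp$-illegal, contradicting the assumption. Only after establishing that $\Phi$ is legal does the paper split into the two ``reasons'' (malformed $\alpha$ vs.\ a bad thread), and at that point the per-thread staticity argument goes through cleanly because one is working at a specific, already-legal prefix. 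Your argument can be repaired by inserting exactly this inductive setup; without it, the bookkeeping you flagged as ``delicate'' cannot be discharged.
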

\begin{proof} We will prove this lemma by induction on the length of the shortest illegal initial segment 
of $\Omega$. 
Assume the conditions of  the lemma. We want to show that $\Gamma$ is a $\wp$-illegal run of $\stl A$. Let $\seq{\Psi,\wp\alpha}$ be the shortest ($\wp$-) illegal initial segment of $\Omega$. Let $\seq{\Phi,\wp\alpha}$ be the shortest initial segment of $\Gamma$ containing all  $\wp$-labeled moves\footnote{In this context, different occurrences of the same labmove count as different labmoves. So, a more accurate phrasing would be ``as many $\wp$-labeled moves as...'' instead ``all  $\wp$-labeled moves of ...''.} of $\seq{\Psi,\wp\alpha}$. If $\Phi$ is a $\wp$-illegal position of $\stl A$
then so is $\Gamma$ and we are done. Therefore, for the rest of the proof, assume that 
\begin{equation}\label{654}
\mbox{\em $\Phi$ is not a $\wp$-illegal position of $\stl A$.}
\end{equation}

Let $\Theta$ be the sequence of those $\gneg\wp$-labeled moves of $\Psi$ that are not in $\Phi$. Obviously
\begin{equation}\label{e141}
\mbox{\em  $\seq{\Psi,\wp\alpha}$ is a $\wp$-delay of $\seq{\Phi,\wp\alpha,\Theta}$.}
\end{equation}
We also claim that
\begin{equation}\label{e142}
\mbox{\em $\Phi$ is a legal position of $\stl A$.}
\end{equation}
Indeed, suppose this was not the case. Then, by (\ref{654}), $\Phi$ should be $\gneg\wp$-illegal. This would make $\Gamma$  a $\gneg\wp$-illegal run of $\stl A$ with $\Phi$ as an illegal initial segment which is shorter than $\seq{\Psi,\wp\alpha}$. Then, by the induction hypothesis, any run for which $\Gamma$ is a $\gneg\wp$-delay, would be $\gneg\wp$-illegal. But, as observed in Lemma 4.6 of \cite{Jap03}, the fact that $\Omega$ is a $\wp$-delay of $\Gamma$ implies that $\Gamma$ is a $\gneg\wp$-delay of $\Omega$. So, $\Omega$ would be $\gneg\wp$-illegal, which is a contradiction because, according to our assumptions, $\Omega$ is $\wp$-illegal.

We are continuing our proof. There are two possible reasons to why $\seq{\Psi,\wp\alpha}$ is an illegal (while 
$\Psi$ being legal)  position of $\stl A$:

{\em Reason 1}: $\alpha$ does not have the form $w.\beta$ for some bitstring $w$ and move $\beta$. Then, in view of (\ref{e142}), 
$\seq{\Phi,\wp\alpha}$ is a $\wp$-illegal position of $\stl A$. As $\seq{\Phi,\wp\alpha}$ happens to be an initial segment of $\Gamma$, the latter then is a $\wp$-illegal run of $\stl A$, as desired.

{\em Reason 2}: $\alpha=w.\beta$ for some bitstring $w$ and move $\beta$ but, for some infinite extension $v$ of $w$, 
$\seq{\Psi,\wp\alpha}^{\preceq v}$ is an illegal --- and hence   $\wp$-illegal --- position of $A$. Clearly (\ref{e141}) implies that 
 $\seq{\Psi,\wp\alpha}^{\preceq v}$ is a $\wp$-delay of $\seq{\Phi,\wp\alpha,\Theta}^{\preceq v}$. Therefore, since 
$A$ is static, by Lemma 4.7 of \cite{Jap03}, $\seq{\Phi,\wp\alpha,\Theta}^{\preceq v}$ is a $\wp$-illegal position of $A$. But $\seq{\Phi,\wp\alpha,\Theta}^{\preceq v}=\seq{\Phi^{\preceq v},\wp\beta,\Theta^{\preceq v}}$.
A $\wp$-illegal position will remain illegal after removing a block of $\gneg\wp$-labeled moves (in particular, $\Theta^{\preceq v}$) at the end of it.
 Hence  $\seq{\Phi^{\preceq v},\wp\beta}$, which is the same as  $\seq{\Phi,\wp\alpha}^{\preceq v}$, is an illegal position of $A$. Consequently, 
$\seq{\Phi,\wp\alpha}$ is an illegal position of $\stl A$.
This, in view of (\ref{e142}), implies that  $\seq{\Phi,\wp\alpha}$ is in fact a $\wp$-illegal run of $\stl A$.
But then, as desired, so is $\Gamma$, because $\seq{\Phi,\wp\alpha}$  is an initial segment of it. 
\end{proof}

Assume  $A$ is a static constant game, $\Gamma$ is a $\wp$-won run of $\stl A$, and $\Omega$ is a $\wp$-delay of $\Gamma$. Our goal is to show that $\Omega$ is also a $\wp$-won run of $\stl A$ (this is exactly what $\stl A$'s being static means). 

If $\Omega$ is a $\gneg\wp$-illegal run of $\stl A$, then it is won by $\wp$ and we are done. So, assume that 
$\Omega$ is not $\gneg\wp$-illegal. According to the earlier mentioned Lemma 4.6 of \cite{Jap03}, if $\Omega$ is a $\wp$-delay of $\Gamma$, then $\Gamma$ is a $\gneg\wp$-delay of $\Omega$. So, by Lemma \ref{l14}, our $\Gamma$ cannot be $\gneg\wp$-illegal, for otherwise so would be $\Omega$. 
$\Gamma$ also cannot be $\wp$-illegal, because otherwise it would not be won by $\wp$. Consequently, $\Omega$ cannot be $\wp$-illegal either, for otherwise, by Lemma \ref{l14}, $\Gamma$ would be $\wp$-illegal. Thus, we have narrowed down our considerations to the case when both $\Gamma$ and $\Omega$ are legal runs of $\stl A$.

The fact that $\Gamma$ is a legal, $\wp$-won run of $\stl A$ implies that, for every (if $\wp=\top$) or some (if $\wp=\bot$)  infinite bitstring $v$,   
$\Gamma^{\preceq v}$ is a $\wp$-won run of $A$, and therefore (as $A$ is static and  $\Omega^{\preceq v}$ is obviously a $\wp$-delay of $\Gamma^{\preceq v}$) $\Omega^{\preceq v}$ is a $\wp$-won run of $A$. Since 
 $\Omega$ is a legal run of  $\stl A$, the above, in turn, means nothing but that $\Omega$ is a $\wp$-won run of $\stl A$. This completes our proof of Theorem \ref{jan31}.

\section{The equivalence between the two versions} 
\begin{theorem}\label{main} 
The formulas \ $\stt P\mli \stl P$ \  and \ $\stl P\mli \stt P$ \ are uniformly valid.
\end{theorem}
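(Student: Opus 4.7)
My plan is to exhibit, for each implication, an EPM that mirrors labmoves between antecedent and consequent and that works uniformly in $P$. The common thread is that both $\stt P$ and $\stl P$ use the identical thread-by-thread winning criterion on the per-infinite-bitstring runs $\Gamma^{\preceq v}$; so once a mirror strategy induces the right correspondence between the per-thread runs on the two sides, the winning analysis collapses to a statement about $P$ itself.

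For $\stt P \mli \stl P$, the machine acts as $\bot$ in $\stt P$ and as $\top$ in $\stl P$. I would maintain the invariant $\Gamma_\stt^{\preceq v} = \Gamma_\stl^{\preceq v}$ for every infinite bitstring $v$, as follows. A non-replicative move $w.\alpha$ by the environment in $\stt P$ (necessarily at a node $w$) is copied verbatim as $w.\alpha$ in $\stl P$. An environment move $w.\alpha$ in $\stl P$ at an arbitrary finite bitstring $w$ is copied into $\stt P$ in two stages: first the machine, using its $\bot$-role in $\stt P$, fires the cascade of replicative moves $\col{u_0},\col{u_1},\dots,\col{u_{k-1}}$ down the path from the current boundary leaf to $w$, promoting $w$ to a node; then it plays $w.\alpha$ in $\stt P$. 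Replicative moves do not affect any $\Gamma^{\preceq v}$, so the invariant persists and the $\top$-winning sets on the two sides coincide.

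The direction $\stl P \mli \stt P$ is the real obstacle, because now the machine is $\top$ in $\stt P$ and so cannot replicate; only the environment can grow $\stt P$'s BT-structure. My plan is a lazy variant of the same mirror: the machine copies environment non-replicative moves $u.\beta$ from $\stt P$ into $\stl P$ at once, and buffers environment $\top$-moves $w.\alpha$ from $\stl P$, firing them in $\stt P$ as $w.\alpha$ only once the environment's own replicative activity has promoted $w$ to a node. The delicate case, and what I expect to be the hardest point, is the $w$'s that the environment keeps starved of replication: for every infinite $v$ extending such a $w$, the $\stt P$-thread coincides with the common run $R_u$ attached to the surviving leaf-ancestor $u$ of $w$, and $R_u$ lacks the $\alpha$'s that $\Gamma_\stl^{\preceq v}$ contains. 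To close this gap I would identify $R_u$ with $\Gamma_\stl^{\preceq v'}$, or at least show that $R_u$ is a $\top$-delay of such a $\Gamma_\stl^{\preceq v'}$, for a suitably chosen infinite $v'$ extending $u$ — intuitively, a branch threading past the environment's starved $\stl P$-moves below $u$ — and then invoke the staticity of $P$ together with the static-preservation machinery from the proof of Theorem~\ref{jan31} to transport $\top$-winnership from $\Gamma_\stl^{\preceq v'}$ to $R_u$. The rest is routine EPM bookkeeping to schedule the buffered firings legitimately within one clock.
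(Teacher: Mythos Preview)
Your treatment of $\stt P\mli\stl P$ is essentially the paper's: copy non-replicative moves verbatim, and when the environment moves at an arbitrary $w$ in $\stl P$, use your $\bot$-role in the antecedent to grow the BT-structure up to $w$ and then copy. That direction is fine.

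The second direction has a genuine gap. Your buffering strategy can be defeated outright. Take $P$ to be any static game in which $\top$ must make a single move $\alpha$ to win (e.g.\ $P=\tlg\add\twg$), so that $\seq{}$ is $\bot$-won and $\seq{\top\alpha}$ is $\top$-won. Let the environment, playing $\top$ in the antecedent $\stl P$, make the two moves $0.\alpha$ and $1.\alpha$, and never replicate in $\stt P$. Both moves are buffered forever, since neither $0$ nor $1$ ever becomes a node of the BT-structure of $\stt P$, whose only leaf remains $\epsilon$. Then $R_\epsilon=\seq{}$, so every thread of $\stt P$ is $\bot$-won and you lose the consequent; meanwhile every infinite bitstring $v$ has $\Gamma_{\stl}^{\preceq v}=\seq{\top\alpha}$, so $\stl P$ is $\top$-won and you lose the antecedent as well. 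Your proposed rescue---finding $v'$ extending $u=\epsilon$ with $R_\epsilon$ equal to, or a $\top$-delay of, $\Gamma_{\stl}^{\preceq v'}$---cannot succeed: a $\top$-delay preserves the multiset of labmoves, but $R_\epsilon$ is missing the $\top\alpha$ that every $\Gamma_{\stl}^{\preceq v'}$ contains. More generally, the environment can place $\top$-moves at a finite antichain of addresses that covers all infinite extensions of $u$, so no branch ``threads past'' them.

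The paper's strategy for this direction is not a buffered mirror at identical addresses but an \emph{address translation}. It maintains a map $f$ from the leaves of the $\stt$-side BT-structure to finite bitstrings on the $\stl$-side, and plays immediately: when the environment makes $w.\alpha$ in the antecedent and $w$ properly extends $f(x)$ for some leaf $x$, the machine updates $f(x):=w$ and plays $x.\alpha$ in $\stt P$ right away (committing leaf $x$ to follow the $w$-branch). In the counterexample above this fires $\epsilon.\alpha$ in $\stt P$ after the first environment move, winning the consequent. The missing idea in your plan is precisely this commitment mechanism: you must translate addresses rather than wait for them to materialise.
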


\begin{proof}  Uniform validity of $\stt P\mli \stl P$ means nothing but existence of an EPM ${\cal E}_1$ such that, for any static game $A$, ${\cal E}_1$  wins  
$\stt A\mli \stl A$, i.e. $\costt\gneg A\mld\stl A$. 
We define such an EPM/strategy/algorithm ${\cal E}_1$ as one that repeats the following routine over and over again (infinitely many times unless one of the iterations never terminates). At any step of the work of the algorithm, $\Psi$ stands for $\Phi^{1.}$, where $\Phi$ is the then-current position of the play. That is, $\Psi$ is the then-current position within the $\costt\gneg A$ component.  

\begin{quote} ROUTINE: Keep granting permission until the adversary makes a move $\beta$ that satisfies the conditions of one of the following two cases, and 
then act according to the corresponding prescription.

{\em Case 1}: $\beta$ is a move $w.\alpha$ in $\costt \gneg A$. Make the same move $w.\alpha$ in $\stl A$. 

{\em Case 2}: $\beta$ is a move $w.\alpha$ in $\stl A$. Make a series of replicative moves (if necessary) in $\costt \gneg A$ so that $w$ becomes a node of the underlying BT-structure of $\seq{\Psi}\costt \gneg A$. Then make the move $w.\alpha$ in $\costt \gneg A$.\end{quote}

Consider
any run $\Omega$ that could be generated when ${\cal E}_1$ (in the role of $\top$) plays as described. It is obvious that ${\cal E}_1$ does not make illegal moves unless its adversary does so first. So, if $\Omega$ is an illegal run of $\costt\gneg A\mld\stl A$, it is $\bot$-illegal and hence $\top$-won. Now assume $\Omega$ is a legal run of $\costt\gneg A\mld\stl A$. Let $\Sigma=\Omega^{1.}$ and $\Pi=\Omega^{2.}$. That is, $\Sigma$ is the run that took place in  $\costt\gneg A$, and $\Pi$ is the run that took place in $\stl A$. If, for every infinite bitstring $v$, $\Pi^{\preceq v}$ is a $\top$-won run of $A$, then $\top$ is the winner in the overall game because it is the winner in its $\stl A$ component. Suppose now $v$ is an infinite bitstring such that $\Pi^{\preceq v}$ is a $\bot$-won run of $A$. With a moment's thought, one can see that $\Sigma^{\preceq v}=\gneg\Pi^{\preceq v}$. So,   $\Sigma^{\preceq v}$ is a $\top$-won run of $\gneg A$. This makes $\Sigma$ a $\top$-won run of $\costt\gneg A$, and hence $\Omega$ a $\top$-won run of  $\costt\gneg A\mld\stl A$, as desired.\vspace{5pt}

To prove the uniform validity of $\stl P\mli \stt P$, we construct an  EPM ${\cal E}_2$ that wins $\costl\gneg A\mld\stt A$ for any static game $A$. The work of ${\cal E}_2$ consists in repeating the following routine over and over again. 
At any step of the work of the algorithm, $\Psi$ stands for $\Phi^{2.}$, where $\Phi$ is the then-current position of the play. That is, $\Psi$ is the then-current position within the $\stt A$ component. Also, ${\cal E}_2$  maintains a record $f$ for a mapping from the leaves $v$ of the underlying BT-structure of $\seq{\Psi}\stt A$ to finite bitstrings $f(v)$, such that (as can be easily seen from an analysis of the work of ${\cal E}_2$) 
\begin{equation}\label{feb4}
\mbox{\em for any two leaves $v_1\not=v_2$,  $f(v_1)$ is not a prefix of $f(v_2)$.}
\end{equation}
 At the beginning, the only leaf is $\epsilon$ (the empty bitstring), and the value of $f(\epsilon)$ is set to $\epsilon$.

\begin{quote}
ROUTINE: Keep granting permission until the adversary makes a move $\beta$ that satisfies the conditions of one of the following three cases, and then act according to the corresponding prescription.

{\em Case 1}: $\beta$ is a replicative move $\col{w}$ in $\stt A$. Let $v=f(w)$. Then  update $f$ by setting $f(w0)=v0$, $f(w1)=v1$ and without changing the value of $f$ on any other (old) leaves of the underlying BT-structure of $\seq{\Psi}\stt A$; do not make any moves.

 {\em Case 2}: $\beta$ is a non-replicative move $w.\alpha$ in $\stt A$. Let $u_1,\ldots,u_n$ be all leaves $u$ of the underlying BT-structure of $\seq{\Psi}\stt A$ such that $w$ is a prefix of $u$. And let $v_1=f(u_1),\ldots,v_n=f(u_n)$. Then make the moves $v_1.\alpha,\ldots,v_n.\alpha$ in $\costl\gneg A$; leave the value of $f$ unchanged.  

  {\em Case 3}: $\beta$ is a move $w.\alpha$ in $\costl \gneg A$. First assume there is a (unique due to (\ref{feb4})) leaf $x$ in the underlying BT-structure of $\seq{\Psi}\stt A$ such that  $w$ is a proper extension of $f(x)$. Then update $f$ by letting $f(x)=w$ and without changing the value of $f$ on any other  leaves; make the move $x.\alpha$ in $\stt A$. Now suppose there is no  leaf $x$ in the underlying BT-structure of $\seq{\Psi}\stt A$ such that  $w$ is a proper extension of $f(x)$. Let 
 $y_1,\ldots,y_n$ be all leaves $y$ of the underlying BT-structure of $\seq{\Psi}\stt A$ such that $w$ is a prefix of $f(y)$ (note: the set of such leaves well may be empty, i.e., $n$ may be $0$). Then make the moves $y_1.\alpha,\ldots,y_n.\alpha$ in $\stt A$; leave the value of $f$ unchanged. 
\end{quote}

Consider any run $\Omega$ that could be generated when ${\cal E}_2$  plays as described. As in the previous case, we may assume that $\Omega$ is legal, for otherwise it can be easily seen to be  $\bot$-illegal and hence $\top$-won.  Let $\Sigma=\Omega^{1.}$ and $\Pi=\Omega^{2.}$. That is, $\Sigma$ is the run that took place in  $\costl\gneg A$, and $\Pi$ is the run that took place in $\stt A$.  Further, for a number $i$ such that 
ROUTINE (in the scenario that generated $\Omega$) is iterated at least $i$ times, we let $f_i$ denote the value of the record $f$ at the beginning of the $i$'th iteration, and $\Psi_i$ denote the position reached by that time in the  $\stt A$ component.  

 Consider any infinite bitstring $v$ and assume that $\Pi^{\preceq v}$ is a $\bot$-won run of $A$ (if there is no such $v$, then obviously $\top$ is the winner in the overall game).  Let $z$ be an infinite bitstring satisfying the following condition:
\begin{quote} For any $i$ such that ROUTINE is iterated at least $i$ times, where $v_i$ is the (unique) prefix of $v$ such that $v_i$ is a leaf of the underlying BT-structure of $\seq{\Psi_i}\stt A$, we have that 
$f_i(v_i)$ is a prefix of $z$.\end{quote}
 With some analysis, details of which are left to the reader, one can see that such a $z$ exists, and that  
 $\Sigma^{\preceq z}=\gneg\Pi^{\preceq v}$. So,   $\Sigma^{\preceq z}$ is a $\top$-won run of $\gneg A$. This makes $\Sigma$ a $\top$-won run of $\costl\gneg A$, and hence $\Omega$ a $\top$-won run of  $\costl\gneg A\mld\stt A$, as desired.
\end{proof}

The present theorem can be applied to various particular $\st,\cost$-containing fragments of the (otherwise open-ended) language of CoL to show that
the two --- tight and loose --- understandings of $\st,\cost$ yield the  the same classes of valid or uniformly valid formulas. This would be done through a rather straightforward induction relying on the fact that the operators of the language respect equivalence in the sense of Theorem \ref{main}, and that modus ponens preserves validity and uniform validity. But, of course, Theorem \ref{main} establishes equivalence between the two versions of $\st,\cost$ in a much stronger sense than just in the sense of validating the same principles.

\end{document}